\newtheorem{Theorem}{Theorem}
\newtheorem*{ThmOld}{Theorem}
\newtheorem{Proposition}{Proposition}
\newtheorem{Corollary}{Corollary}
\newtheorem{Lemma}{Lemma}
\newcommand{\beq}{\begin{equation}}
\newcommand{\eeq}{\end{equation}}
\journal{}
\begin{document}

\begin{frontmatter}

%% Title, authors and addresses

%% use the tnoteref command within \title for footnotes;
%% use the tnotetext command for theassociated footnote;
%% use the fnref command within \author or \affiliation for footnotes;
%% use the fntext command for theassociated footnote;
%% use the corref command within \author for corresponding author footnotes;
%% use the cortext command for theassociated footnote;
%% use the ead command for the email address,
%% and the form \ead[url] for the home page:
%% \title{Title\tnoteref{label1}}
%% \tnotetext[label1]{}
%% \author{Name\corref{cor1}\fnref{label2}}
%% \ead{email address}
%% \ead[url]{home page}
%% \fntext[label2]{}
%% \cortext[cor1]{}
%% \affiliation{organization={},
%%            addressline={}, 
%%            city={},
%%            postcode={}, 
%%            state={},
%%            country={}}
%% \fntext[label3]{}

\title{On the dimension of the solution space of linear difference equations over the ring of infinite sequences\\ \emph{\normalsize In memory of Marko Petkov\v sek}}

%% use optional labels to link authors explicitly to addresses:
%% \author[label1,label2]{}
%% \affiliation[label1]{organization={},
%%             addressline={},
%%             city={},
%%             postcode={},
%%             state={},
%%             country={}}
%%
%% \affiliation[label2]{organization={},
%%             addressline={},
%%             city={},
%%             postcode={},
%%             state={},
%%             country={}}

\author[1]{Sergei Abramov}
\ead{sergeyabramov@mail.ru}

\affiliation[1]{organization={Federal Research Center ``Computer Science and Control'' of the Russian Academy of Sciences},%Department and Organization
            city={Moscow},
            country={Russia}}

\author[2]{Gleb Pogudin}
\ead{gleb.pogudin@polytechnique.edu}

\affiliation[2]{organization={LIX, CNRS, Ecole polytechnique, Institute Polytechnique de Paris},
    city={Paris},
    country={France}}

\begin{abstract}
For a linear difference equation with the coefficients being computable sequences, we establish algorithmic undecidability of the problem of determining the dimension of the solution space including the case when some additional prior information on the dimension is available.
\end{abstract}

\begin{keyword}
Linear difference operator with sequence coefficients\sep
solution space dimension\sep undecidability
%% keywords here, in the form: keyword \sep keyword

%% PACS codes here, in the form: \PACS code \sep code

%% MSC codes here, in the form: \MSC code \sep code
%% or \MSC[2008] code \sep code (2000 is the default)

\end{keyword}

\end{frontmatter}

%% \linenumbers

\section{Introduction}

The present paper studies equations whose coefficients and solutions of interest are two-sided infinite sequences.
Infinite sequences are used in many areas of mathematics.
When working with these sequences
the way they are represented plays an important role. In this article, an algorithmic
approach is used: the sequence
is defined by an algorithm (each sequence has its own) for calculating the value of an element by the index
of this element.
More formally, we will call a two-sided sequence of rational numbers $\{v(n)\}_{n \in \mathbb{Z}}$ \emph{computable} if it is given by an algorithm computing the value of $v(n)$ for any given $n \in \mathbb{Z}$.
Other approaches are also possible, for example, if coefficient sequences satisfy linear
recurrences %relations
with constant coefficients, we naturally come to the concept of $C^2$-finite sequences
considered, for example, by~\cite{C2finite}.

Throughout the paper, $R$ stands for the ring of two-sided sequences having rational number terms with respect to termwise addition and multiplication.
For a linear difference equation
\beq
\label{eq}
a_r(n)y(n+r)+\dots +a_1(n)y(n+1)+a_0(n)y(n)=0
\eeq
with computable
$a_r(n),\dots,a_0(n)\in R$ as coefficients, we consider the $\mathbb{Q}$-vector space of solutions in~$R$.
In the sequel,
``equation'' is always understood as an equation of the form \eqref{eq}, and ``solution'' is a solution belonging to $R$.

In this paper, we establish algorithmic undecidability of the problem of
computing
the dimension of the solution space of an equation of the form~\eqref{eq} and the problem of
testing whether an equation has nonzero solutions at all (Section~\ref{sec:exist_nonzero}).
Furthermore, it is proven
that even when a finite set of possible values of the dimension of the solution space is known in advance,
if this set contains more than one element, then in the general case the dimension cannot be found algorithmically (Section~\ref{sec:general}).
On the other hand, we note the existence of naturally arising problems admitting an algorithmic solution (Section~\ref{solv}).
Our proofs are in general based on a consequence
of classical A.Turing's result on undecidability of the well-known halting problem \citep{Tur36}
which can be stated as follows.
\begin{ThmOld}[\cite{Tur36}]
Let $M$ be a set with $1 < |M| < \infty$.
Then there is no algorithm which, for a given computable  $c(0), c(1), \dots$ taking values in $M$ and any $a\in M$, determines whether the sequence contains an element equal to $a$.
\end{ThmOld}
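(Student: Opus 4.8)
The plan is to reduce the halting problem to the decision problem in the statement, so that an algorithm solving the latter would yield a decision procedure for the former, contradicting the undecidability of halting on which the paper relies. Since $1 < |M|$, I begin by fixing two distinct elements $a, b \in M$; the finiteness hypothesis $|M| < \infty$ plays no role beyond the setting, and only the existence of these two values is actually used.

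First I would assume, toward a contradiction, that there is an algorithm $\mathcal{A}$ which, given a description of an algorithm computing a sequence $c(0), c(1), \dots$ with values in $M$ together with an element of $M$, correctly decides whether that element occurs among the terms of the sequence. Next, to an arbitrary Turing machine $T$ started on the empty tape I would associate the sequence $c_T$ defined by $c_T(n) = a$ if $T$ reaches a halting configuration within its first $n$ steps, and $c_T(n) = b$ otherwise. This $c_T$ is computable in the required sense: to evaluate $c_T(n)$ one simply simulates $T$ for $n$ steps and checks whether a halting state has been entered, which always terminates and returns a definite element of $M$. By construction, $c_T$ takes the value $a$ at some index if and only if $T$ eventually halts.

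The conclusion is then immediate. A description of the algorithm computing $c_T$ can be produced uniformly and effectively from a description of $T$, so running $\mathcal{A}$ on this description together with the element $a$ would decide, for every $T$, whether $T$ halts on the empty input. As this is impossible, no such $\mathcal{A}$ can exist, which is the assertion of the theorem.

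I expect no serious obstacle in this argument; the only points requiring care are the verification that the bounded simulation defining $c_T$ genuinely yields a computable sequence over $M$, and the observation that the reduction consumes only two of the (possibly many) values of $M$. In other words, $1 < |M|$ is precisely the hypothesis that makes the encoding ``halted / not yet halted'' available, while $|M| < \infty$ is not exploited anywhere in the reduction.
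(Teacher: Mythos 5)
Your proof is correct and is exactly the standard argument that the paper's citation stands for: the paper does not prove this statement itself but quotes it as a consequence of Turing's undecidability of the halting problem, and your reduction (encoding ``$T$ has halted within $n$ steps'' as $a$ versus $b$, so that $a$ occurs in $c_T$ iff $T$ halts) is the canonical way to derive it. Your side remarks are also accurate: only $1 < |M|$ is needed to make the two-valued encoding available, and the finiteness of $M$ plays no role in the reduction.
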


This article is a continuation of the thematic line of publications \citep{ABP, acacca, nonlinear1, Pet2006, nonlinear2, nonlinear3}
on difference equations with general infinite sequences as coefficients and/or solutions.
Among these publications, the article by \cite{Pet2006} belongs to those works that provided a general basis for computer-algebraic studies of infinite sequences.

An abridged preliminary version of this paper appeared as a proceedings paper~\citep{Renat}.

\section{Possible value of dimension}

Before passing to the main topic of the paper, algorithmic questions related to computing the dimension of the solution space of~\eqref{eq}, we will show that in the case when the coefficients are computable sequences, there is no a priori relation between the order of the equations and the dimension of the solution space (in contrast to, say, the constant coefficient case).

\begin{Proposition}\label{lem:possible_dimensions}
  For every $r \in \mathbb{Z}_{\geqslant 0}$ and $d \in (\mathbb{Z}_{\geqslant 0} \cup \{\infty\})$, there exists an equation of the form~\eqref{eq} of order $r$ with both $a_0(n)$ and $a_r(n)$ not identical to zero which has $d$-dimensional solution space.
\end{Proposition}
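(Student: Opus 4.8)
The plan is to reduce everything to the first-order case. The case $r=0$ is immediate: the equation $a_0(n)y(n)=0$ with $a_0\not\equiv 0$ has as its solution space exactly the sequences supported on the zero set $\{n : a_0(n)=0\}$, since $a_0(n)\neq 0$ forces $y(n)=0$ while $a_0(n)=0$ leaves $y(n)$ free. Choosing $a_0$ to vanish at precisely $d$ integers (e.g.\ $a_0(n)=0$ for $n\in\{1,\dots,d\}$ and $a_0(n)=1$ otherwise), or at infinitely many integers when $d=\infty$, yields a $d$-dimensional solution space.

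For $r\geqslant 1$ I would use only the two extreme coefficients, i.e.\ consider the ``gap'' equation $a_r(n)y(n+r)+a_0(n)y(n)=0$ with $a_1\equiv\dots\equiv a_{r-1}\equiv 0$. Every constraint then involves two values $y(m),y(m+r)$ lying in the same residue class modulo $r$, and every index lies in exactly one class; hence the system is block-diagonal and its solution space splits as the direct sum, over $c\in\{0,\dots,r-1\}$, of the solution spaces of the first-order recurrences $a_r(rk+c)z(k+1)+a_0(rk+c)z(k)=0$ in the variable $z(k)=y(rk+c)$. In particular the dimension is additive over the $r$ classes. It therefore suffices to realize, by a first-order recurrence, any prescribed dimension in $\mathbb{Z}_{\geqslant 0}\cup\{\infty\}$: for finite $d$ I assign dimension $d$ to one class and dimension $0$ to the remaining $r-1$ classes, and for $d=\infty$ I assign an infinite-dimensional recurrence to one class and dimension $0$ to the rest.

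It remains to build first-order recurrences $p(k)z(k+1)+q(k)z(k)=0$ with $p,q\not\equiv 0$ of each possible dimension, which I would do by prescribing the zero sets of $p$ and $q$. For dimension $0$, set $q\equiv 1$, with $p(k)=0$ for $k\geqslant 0$ and $p(k)=1$ for $k<0$: each equation with $k\geqslant 0$ forces $z(k)=0$, and the invertible steps for $k<0$ propagate these zeros backward, leaving only the trivial solution. For the infinite-dimensional recurrence I alternate the coefficients: at even $k$ put $(q,p)=(0,1)$ and at odd $k$ put $(q,p)=(1,0)$; then every odd-indexed value is forced to $0$, while every even-indexed value occurs with zero coefficient in both equations containing it and is thus a free parameter, giving infinitely many independent solutions. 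For finite $d\geqslant 1$ I use this even/odd gadget only on a block producing exactly $d$ free (even) slots and cap it on both sides by the forced-zero pattern above, so that exactly $d$ decoupled parameters survive. Mapping back via $a_r(rk+c)=p(k)$, $a_0(rk+c)=q(k)$ and noting that the nontrivial class keeps $p,q$ nonzero somewhere, both $a_0$ and $a_r$ are not identically zero in every case.

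The main obstacle is the dimension bookkeeping in these first-order building blocks: one must verify that the forced-zero regions carry no hidden degree of freedom (every such value is genuinely pinned, not merely decoupled) and that the surviving free slots are mutually independent precisely because each occurs with zero coefficient in both of the two equations in which it appears. Granting this verification, additivity of dimension over the residue classes delivers the stated value $d$, completing the construction.
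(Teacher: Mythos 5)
Your argument is correct. It rests on the same two ingredients as the paper's proof --- a decomposition of an order-$r$ equation into independent pieces indexed by residue classes modulo the shift (what the paper calls \emph{interlacing}), and building blocks whose solution-space dimension is controlled by prescribing the zero patterns of the coefficients --- but you combine them differently. The paper takes the order-zero equation $E_d\colon w_d(n)y(n)=0$, which already has an exactly $d$-dimensional solution space, and interlaces it with one auxiliary order-$r$ equation $E^{\circ}_r$ having only the zero solution; you instead write a single two-term ``gap'' equation $a_r(n)y(n+r)+a_0(n)y(n)=0$, which is an $r$-fold interlacing of first-order recurrences, put a trivial (zero-dimensional) recurrence on $r-1$ of the residue classes, and concentrate the whole dimension $d$ in the remaining class via a first-order recurrence whose coefficients vanish so as to leave exactly $d$ (or infinitely many) free slots. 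Your version has the advantage that the resulting equation has order exactly $r$ by construction and requires no bookkeeping of how interlacing affects the order (the paper's $\oplus$ of two order-$\rho$ equations produces an equation of order $2\rho$, so its final step needs a word of justification); the price is that the dimension count is pushed into your first-order gadgets rather than being read off from the order-zero equation $E_d$. The verifications you flag --- that the forced-zero regions really pin every value (e.g.\ the backward propagation $z(k)=-z(k+1)$ from $z(0)=0$ in the dimension-zero block) and that each of the $d$ surviving even slots occurs with coefficient zero in both of the two equations containing it --- do go through, so the construction delivers exactly dimension $d$ while keeping $a_0$ and $a_r$ not identically zero in all cases.
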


Before proceeding to the proof of the proposition, we introduce one useful construction:
for difference equations~\eqref{eq}, we define \emph{interlacing} as follows.
Consider two such equations and assume that they both have order $r$,
since this can be achieved by adding several zero coefficients.
Denote their coefficients by $a_0(n), \ldots, a_r(n)$ and $b_0(n), \ldots, b_r(n)$, respectively.
We define sequences $c_0(n), \ldots, c_r(n)$ as follows
\[
c_i(n) := \begin{cases}
     a_i(n / 2), \text{ if $n$ is even},\\
     b_i((n - 1) / 2), \text{ if $n$ is odd}.
\end{cases}
\]
Now we define the interlacing of the original equations (we will denote it by the direct sum sign~$\oplus$) as the following equation of order $2r$:
\begin{equation}\label{eq:direct_sum}
c_r(n) y(n + 2r) + c_{r - 1}(n) y(n + 2r - 2) + \ldots + c_1(n) y(n + 2) + c_0(n) y(n) = 0.
\end{equation}
By construction, the solutions of the equations~\eqref{eq:direct_sum} are exactly sequences of $y(n)$ for which $y(2n)$ is a solution to the first equation and $y(2n + 1)$ is a solution to the second.

In particular, the dimension of the solution space of the constructed equation is the sum of the dimensions of the solution spaces of the original equations.
Interlacing of more than two terms is defined analogously.

\begin{proof}[Proof of Proposition~\ref{lem:possible_dimensions}]
  For $d \in (\mathbb{Z}_{\geqslant 0} \cup \{\infty\})$, we define a sequence $w_d(n)$ such that $w_d(n) = 0$ for $0 \leqslant n < d$ and $w_d(n) = 1$ otherwise.
  We denote by $E_d$ the equation $w_d(n) y(n) = 0$.
  We note that the dimension of the solution space of $E_d$ is equal to $d$.
  Equations $E_d$ prove the lemma for the case $r = 0$, so we will further focus on the case $r > 0$.

  For $r > 0$, we define an equation $E^\circ_r$ by $(1 - w_1(n)) y(n) + y(n + r) = 0$.
  For every $n \neq 0$, it implies that $y(n + r) = 0$, so $y(n) = 0$ for every $n \neq r$.
  Furthermore, by taking $n = 0$, we obtain $y(0) + y(r)$ which, together with $y(0) = 0$ established earlier, implies $y(r) = 0$.
  Thus, the only solution of $E^\circ_r$ is the zero solution.
  Now we consider an equation $E_d \oplus E^\circ_r$. It has order $r$ with the leading and trailing coefficients being not identically zero, and it has $d$-dimensional solution space.
\end{proof}

%%%%%%%%%%%%%%%%%%%%%%%%%%%%%%%%%%%%%%%

\section{Existence of nonzero solutions}
\label{sec:exist_nonzero}
\begin{Proposition}
\label{prp3}
Let
\beq
\label{seq}
v(0), v(1), \dots
\eeq
be a computable sequence.
Then a first-order difference equation (which will be called {\em signal}) can be presented, the dimension of the solution space of which is equal to $1$ if \eqref{seq} is identically zero, and is equal to $0$ (i.e., the equation has no nonzero solutions) otherwise.
\end{Proposition}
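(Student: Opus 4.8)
The plan is to write down a single explicit first-order equation whose coefficients are built from a computable ``latch'' that records whether a nonzero entry of \eqref{seq} has been seen so far. Concretely, I would define $s \colon \mathbb{Z} \to \{0,1\}$ by $s(n) = 1$ when $v(k) = 0$ for all $k$ with $0 \leqslant k \leqslant n$, and $s(n) = 0$ otherwise; for $n < 0$ the range is empty, so $s(n) = 1$. This sequence is computable (for each $n$ one only inspects the finitely many values $v(0), \ldots, v(n)$, each of which is computable), it is nonincreasing in $n$, and $s \equiv 1$ holds precisely when \eqref{seq} has no nonzero element.

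I then take as the signal the first-order equation
\begin{equation}
  s(n)\, y(n + 1) + (1 - 2 s(n))\, y(n) = 0,
\end{equation}
that is, $a_1(n) = s(n)$ and $a_0(n) = 1 - 2 s(n)$, both of which lie in $R$ and are computable. At an index with $s(n) = 1$ the relation reads $y(n + 1) = y(n)$, while at an index with $s(n) = 0$ it reads $y(n) = 0$. The order is genuinely $1$, since $a_1(n) = 1$ for every $n < 0$.

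The verification splits into two cases. If \eqref{seq} has no nonzero element, then $s \equiv 1$, the equation becomes $y(n + 1) = y(n)$ for all $n$, its solutions are exactly the constant sequences, and the solution space is $1$-dimensional. If instead $v(m) \neq 0$ for some $m$, take the least such $m$; then $s(n) = 0$ for all $n \geqslant m$, so the equation forces $y(n) = 0$ for every $n \geqslant m$. For $n < m$ we have $s(n) = 1$, and the relations $y(n) = y(n + 1)$ propagate the value $y(m) = 0$ downward through all smaller indices, including the negative ones; hence $y \equiv 0$ and the solution space is $0$-dimensional.

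The step that requires care, and the one I would expect to be the main obstacle, is arranging that a \emph{single} nonzero entry annihilates the entire solution space rather than only part of it. A construction that merely forces $y(m) = 0$ at the one offending index would split the bi-infinite chain $\cdots = y(n) = y(n+1) = \cdots$ into two pieces and leave the remaining half-line free, yielding dimension $1$ instead of $0$. Using the monotone latch $s$ repairs this: once a nonzero entry appears, every subsequent index imposes $y(n) = 0$ outright, so the whole forward ray vanishes, and the constancy relations at the remaining indices then drag all the backward indices to zero as well.
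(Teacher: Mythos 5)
Your proof is correct and follows essentially the same route as the paper: both build the same computable monotone ``latch'' from $v$ and use a first-order equation that reads $y(n+1)=y(n)$ while the latch is on and annihilates $y$ once it switches off. The only (cosmetic) difference is that the paper keeps the leading coefficient equal to $1$ and reflects the latch via $y(n+1)-w(-n)y(n)=0$, so the forced zeros sit at negative indices and propagate forward, whereas you place the vanishing coefficient in front of $y(n+1)$ and propagate the zeros backward; both variants yield the same dimensions $1$ and $0$ in the two cases.
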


\begin{proof}
Based on the computable sequence $v(n)$, we define a computable two-sided sequence~$w(n)$:
\[
w(n) := \begin{cases}
     1, \text{ if $n<0$},\\
     1, \text{ if $n\geqslant 0$ and $v(k)=0$ for all $k=0, 1,\dots, n$},\\
     0, \text{ if $n\geqslant 0$ and $v(k) \neq 1$ for some $k$ such that $0\leqslant k \leqslant n$}.
\end{cases}
\]
By construction the sequence $w(n)$ consists of only ones if and only if the sequence $v(n)$ consists of only zeros.
If the sequence $v(n)$ contains at least nonzero element, then there is $n_0$ such that $w(n)=1$ for $n\leqslant n_0$ and $w(n)=0$ for $n> n_0$.
In the first case, the equation $y(n+1)-w(-n)y(n)=0$ has a solution space of dimension 1 (all constant sequences and only them will be the solutions),
in the second case, the solution space has dimension 0 (the equation has no nonzero solutions).
\end{proof}

The following theorem is a direct consequence of Proposition \ref{prp3}.
\begin{Theorem}
\label{prp3+}
(i) There is no algorithm that tests the existence of a nonzero solution to a given equation.

(ii) There is no algorithm that computes the dimension of the solutions space of a given equation.
\end{Theorem}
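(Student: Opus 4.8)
The plan is to obtain both statements by reducing the undecidable problem furnished by the quoted theorem of Turing to the two algorithmic tasks at hand, using the signal construction of Proposition~\ref{prp3} as the reduction. To this end I would first pin down the concrete instance of Turing's theorem that drives everything: taking $M = \{0,1\}$ and $a = 1$, there is no algorithm which, given a computable sequence $v(0), v(1), \dots$ with values in $\{0,1\}$, decides whether some $v(k)$ equals $1$ — equivalently, whether the sequence has a nonzero element. I treat this as the sole source of undecidability and reduce it to each of (i) and (ii).

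For (i) I would argue by contradiction. Suppose an algorithm $\mathcal{A}$ exists that, given an algorithm for the coefficients of an equation, decides whether that equation has a nonzero solution. Given an arbitrary computable $\{0,1\}$-valued sequence $v$, apply Proposition~\ref{prp3} to produce the associated signal; by that proposition its solution space has dimension $1$ when $v$ consists only of zeros and dimension $0$ otherwise. Hence the signal has a nonzero solution if and only if $v$ has no nonzero element. Running $\mathcal{A}$ on the signal would therefore decide whether $v$ contains a nonzero element, contradicting Turing's theorem; so no such $\mathcal{A}$ can exist.

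For (ii) I would exploit that the dimension of the signal always lies in $\{0,1\}$ and equals $0$ exactly when $v$ has a nonzero element. Thus any hypothetical algorithm computing the dimension of an arbitrary equation, applied to the signal built from $v$, immediately yields a decision procedure for the existence of nonzero solutions — just test whether the returned dimension is $0$ — which is impossible by part (i). Equivalently, one reads off directly from the dimension whether $v$ contains a nonzero element, again contradicting Turing's theorem.

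Since Proposition~\ref{prp3} already supplies the signal, the only point that genuinely requires care — and the only content beyond that proposition — is verifying that the reduction $v \mapsto \text{signal}$ is \emph{effective}: from an algorithm computing $v$ one must be able to produce an algorithm computing the coefficient sequence $w$ of the signal. This is exactly what the construction in the proof of Proposition~\ref{prp3} guarantees, since each $w(n)$ is determined by the finitely many values $v(0), \dots, v(n)$, so an algorithm for $v$ mechanically yields one for $w$. Granting this uniformity, both reductions are algorithmic and the contradictions above go through; I expect this effectiveness check to be the only mild obstacle, everything else being a direct consequence of Proposition~\ref{prp3}.
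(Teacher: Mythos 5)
Your proposal is correct and follows essentially the same route as the paper: the paper derives the theorem directly from Proposition~\ref{prp3}, noting at the end of that proposition's proof that an algorithm for either task would detect nonzero elements in $v$, contradicting Turing's result. Your additional remark on the effectivity of the map $v \mapsto w$ is a sound (if implicit in the paper) point, but does not change the argument.
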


\begin{proof}
An algorithm that tests for the presence of nonzero solutions to a given equation would make it possible to check for the presence of nonzero elements in a given computable sequence $v(0), v(1),\dots$, which contradicts Turing's result \citep{Tur36}.
\end{proof}

\section{Computing dimension with a priori knowledge}
\label{sec:general}
In this section, we prove that even with some a priori restrictions on the dimension of the solution space, the problem of determining the exact dimension is still undecidable.

\begin{Theorem}\label{thm:general}
   For any subset $\mathcal{S} \subseteq (\mathbb{Z}_{\geqslant 0} \cup \{\infty\})$ with $|\mathcal{S}| > 1$,
   there is no algorithm that
   computes  the dimension of the solution space $d$ for a given equation of the form~\eqref{eq}, for which it is known in advance that $d\in \mathcal{S}$.
\end{Theorem}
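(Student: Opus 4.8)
The plan is to reduce the undecidable problem from Turing's theorem (detecting a nonzero element in a computable sequence) to the dimension-determination problem, but now the reduction must force the dimension to land in the prescribed set~$\mathcal{S}$. Since $|\mathcal{S}| > 1$, I would fix two distinct values $d_1, d_2 \in \mathcal{S}$ with $d_1 < d_2$; it suffices to build, from a computable sequence $v(0), v(1), \dots$, an equation whose solution space has dimension $d_1$ in one case (say, $v$ has no nonzero element) and $d_2$ in the other, while always guaranteeing $d \in \{d_1, d_2\} \subseteq \mathcal{S}$. An algorithm computing the exact dimension under the promise $d \in \mathcal{S}$ could then distinguish the two cases and thereby decide the presence of a nonzero element, contradicting the cited theorem.

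The key construction I would use is the interlacing operator~$\oplus$ defined before Proposition~\ref{lem:possible_dimensions}, which adds the dimensions of the solution spaces of its summands. The idea is to produce an equation whose dimension is always a fixed baseline plus a ``switch'' that contributes either $0$ or $d_2 - d_1$ depending on whether $v$ contains a nonzero element. First I would handle the finite case $d_2 < \infty$: using the signal equation of Proposition~\ref{prp3}, I can build from $v$ a first-order equation whose solution space has dimension~$1$ when $v$ is all zeros and dimension~$0$ otherwise (or the reverse). Interlacing several copies of such a signal, together with fixed-dimension equations $E_m$ from the proof of Proposition~\ref{lem:possible_dimensions}, lets me assemble a total dimension of exactly $d_1$ or $d_2$. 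Concretely, one takes a block contributing a constant $d_1$ unconditionally, plus a block contributing an extra $d_2 - d_1$ exactly when a nonzero element is present; the additive behavior of~$\oplus$ makes the bookkeeping routine.

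The case $d_2 = \infty$ requires a slightly different switch: here I would arrange a component whose solution space is finite-dimensional (contributing toward the baseline $d_1$) when $v$ is all zeros, but becomes infinite-dimensional when a nonzero element appears. A natural way is to use a sequence $w(n)$ derived from~$v$ that equals the ``all ones'' pattern (forcing an infinite-dimensional constant-type solution space) once a nonzero entry of $v$ is detected, and otherwise stays in a pattern giving only finitely many solutions; interlacing this with an appropriate fixed-dimension equation yields dimension $d_1$ in the all-zero case and $\infty = d_2$ otherwise.

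The main obstacle I anticipate is the uniformity and computability of the construction: the assembled coefficient sequences $c_i(n)$ must be \emph{computable two-sided sequences} that are produced algorithmically from~$v$, with both leading and trailing coefficients not identically zero, and the dimension count must be provably exactly $d_1$ or $d_2$ with no spurious extra solutions creeping in from the interlacing or the boundary terms. Verifying that the interlaced equation introduces no unintended solutions—especially controlling the finite-dimensional ``switch'' in the $d_2 = \infty$ case and ensuring the promise $d \in \mathcal{S}$ genuinely holds in both branches—is where the careful argument will be needed, whereas the final reduction to Turing's theorem is then immediate.
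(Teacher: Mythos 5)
Your proposal is correct and follows essentially the same route as the paper: in the finite case it interlaces a fixed $d_1$-dimensional equation with $d_2-d_1$ copies of the signal equation of Proposition~\ref{prp3}, and in the infinite case it replaces the signal by an equation $w(n)y(n)=0$ whose solution space jumps to infinite dimension once the relevant element is detected, exactly as the paper does with $E_2\oplus E_3$. The details you flag (computability of the interlaced coefficients and the additivity of dimension under $\oplus$) are handled in the paper by the construction preceding Proposition~\ref{lem:possible_dimensions}, so no genuine gap remains.
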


\begin{proof}
   Consider two distinct elements $a$ and $b$ from $\mathcal{S}$.
   For an arbitrary computable sequence $v(0),\; v(1), \ldots$ we will construct an equation with the dimension of the solution space being $b$
   if $v(n)$ is identically zero and being $a$ otherwise.
   Then the undecidability of the problem of determining the dimension from the set $\{a, b\} \subseteq \mathcal{S}$ will
   follow from Turing's result~\citep{Tur36}.

   Consider the case $a, b \neq \infty$, and let $b > a$.
   Then we consider the equation $E_0$ having an $a$-dimensional space of solutions (for example, any equation of order $a$
   with constant nonzero coefficients) and the equation $E_1$ constructed from $v(n)$ in Proposition~\ref{prp3}.
   Consider the equation $E_0 \oplus \underbrace{E_1 \oplus \ldots \oplus E_1}_{b - a \text{ times}}$ (recall that $\oplus$
   stands for interlacing).
   The dimension of its solution space is equal to $b$ if $v(n)$ is identically zero and $a$ otherwise.
   Which is what we aimed at.

  Consider now the case when one of $a$ and $b$ is equal to infinity, let it be~$a$.
  We construct a sequence $w(n)$ such that $w(n) = 1$ for $n < 0$, $w(n) = 1$ for $n \geqslant 0$ if all $v(0), \ldots , v(n)$ are zeros and $w(n) = 0$ otherwise.
  We define the equation $E_2$ as $w(n) y(n) = 0$.
  If all elements of $v(n)$ are zeros, then $w(n) \equiv 1$, and hence the only solution is zero. 
  If a nonzero element occurs in $v(n)$, then there are infinitely many zeros in $w(n)$, and hence the dimension of the space of solutions will be infinite-dimensional.
  Let the equation $E_3$ be any equation that has an $b$-dimensional space of solutions.
  Then $E_2 \oplus E_3$ will be the desired equation.
\end{proof}

\begin{Corollary}\label{cor:determine_if_k}
For any non-negative integer $k$, there is no algorithm that would check if the solution space of the equation~\eqref{eq} has dimension~$k$.
\end{Corollary}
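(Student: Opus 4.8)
The plan is to derive the corollary directly from Theorem~\ref{thm:general} by a reduction: a decision procedure for the property ``the dimension equals $k$'' would, when applied to equations whose dimension is promised to lie in a suitably chosen two-element set containing $k$, amount to computing that dimension exactly, which the theorem forbids. So the whole proof is a short argument by contradiction that repackages an unpromised decision problem as a promised computation problem.

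Concretely, I would fix a non-negative integer $k$ and suppose, towards a contradiction, that there is an algorithm $A$ which, given any equation of the form~\eqref{eq}, decides whether its solution space has dimension exactly $k$. I then take the two-element set $\mathcal{S} = \{k, k+1\} \subseteq (\mathbb{Z}_{\geqslant 0} \cup \{\infty\})$, for which $|\mathcal{S}| > 1$. Now consider an arbitrary equation of the form~\eqref{eq} carrying the a priori promise that its dimension $d$ satisfies $d \in \mathcal{S}$, and run $A$ on it. Since $d \in \{k, k+1\}$, a positive answer of $A$ certifies $d = k$, while a negative answer forces $d = k+1$; in either case we have recovered the exact value of $d$. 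Thus $A$ yields an algorithm that computes $d$ for every equation carrying the promise $d \in \mathcal{S}$, contradicting Theorem~\ref{thm:general}, and hence no such $A$ can exist.

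I do not expect a genuine obstacle here, since all of the computational content lives in Theorem~\ref{thm:general}, whose instances (obtained by interlacing the signal of Proposition~\ref{prp3} with an equation of prescribed dimension from Proposition~\ref{lem:possible_dimensions}) already realize precisely two dimension values drawn from $\mathcal{S}$. The argument is moreover uniform in the choice of the second element: any $k' \in (\mathbb{Z}_{\geqslant 0} \cup \{\infty\})$ with $k' \neq k$ (for instance $k' = \infty$) works identically, as the only features used are that $\mathcal{S} = \{k, k'\}$ has more than one element and that the two possible answers of $A$ separate its two members. The single point that warrants a sentence of care is conceptual rather than technical: the corollary concerns the unpromised problem ``is the dimension $k$?'', so one must note that an algorithm solving it in general in particular solves it on the promised family of instances, which is exactly what is needed to invoke the theorem.
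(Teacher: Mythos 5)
Your proposal is correct and is essentially the paper's own proof: both reduce the decision problem to the promised computation problem over $\mathcal{S} = \{k, k+1\}$ and invoke Theorem~\ref{thm:general}. You merely spell out the yes/no-to-value translation more explicitly than the paper does.
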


\begin{proof}
   If such an algorithm existed, then it could be used to calculate the dimension of the solution space in the case when it is known that the solution space is contained in the set $\mathcal{S} = \{k,\; k + 1\}$.
   This would contradict Theorem~\ref{thm:general}.
\end{proof}

\section{A case of decidability}
\label{solv}

Consider the case when the sequences $a_0(n), \ldots, a_r(n)$ are in fact periodic.
We will show that in this case, the dimension of the solution space can be computed using some standard tools from computer algebra, and the rest of the section will be devoted to proving the following proposition.
\begin{Proposition}\label{prop:constructive}
  There is an algorithm which takes as input periodic sequences $a_0(n), \ldots, a_r(n)$ and computes the dimension of the solution space of the equation
  \[
  a_r(n) y(n + r) + \ldots + a_1(n) y(n + 1) + a_0(n) y(n) = 0
  \]
  in the ring of two-sided sequences.
\end{Proposition}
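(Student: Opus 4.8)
The plan is to reduce the problem to a finite computation in linear algebra over $\mathbb{Q}$ by exploiting periodicity. First I would let $P$ be a common period of $a_0(n), \ldots, a_r(n)$ (for instance the least common multiple of their individual periods) and regroup the unknown sequence into blocks: set $Y(m) := (y(Pm), y(Pm+1), \ldots, y(Pm + P - 1))^{\mathsf{T}} \in \mathbb{Q}^P$. Writing each index as $n = Pm + j$ with $0 \leqslant j < P$ and using $a_i(Pm + j) = a_i(j)$, the $P$ equations obtained for $j = 0, \ldots, P - 1$ become, after collecting terms, a single \emph{constant-coefficient} matrix difference equation
\[
B_s\, Y(m + s) + \ldots + B_1\, Y(m + 1) + B_0\, Y(m) = 0, \qquad m \in \mathbb{Z},
\]
with constant matrices $B_0, \ldots, B_s \in \mathbb{Q}^{P \times P}$ and block-order $s = \lceil r / P \rceil$. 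Exactly as for the interlacing construction~\eqref{eq:direct_sum}, the assignment $y \mapsto (Y(m))_{m \in \mathbb{Z}}$ is a $\mathbb{Q}$-linear isomorphism between the solution space of the original equation and that of the blocked system, so the two dimensions coincide and it suffices to compute the latter. (When $s = 0$, i.e.\ $r = 0$, the system is the pointwise constraint $B_0\, Y(m) = 0$, whose solution space has dimension $\infty$ or $0$ according as $B_0$ is singular or not; this is the $s = 0$ instance of the formula below.)

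Next I would pass to first-order companion form: with $Z(m) := (Y(m), \ldots, Y(m + s - 1))^{\mathsf{T}} \in \mathbb{Q}^{sP}$ the blocked system is equivalent to a pencil equation $\mathcal{E}\, Z(m + 1) = \mathcal{F}\, Z(m)$ for suitable block-companion matrices $\mathcal{E}, \mathcal{F} \in \mathbb{Q}^{sP \times sP}$, where $\mathcal{E}$ is singular precisely when $B_s$ is and $\mathcal{F}$ precisely when $B_0$ is. The heart of the argument is to read off the dimension of the space of two-sided solutions of such a constant matrix recurrence from the Kronecker canonical form of the pencil $\lambda \mathcal{E} - \mathcal{F}$. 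The relevant bookkeeping is: a finite nonzero eigenvalue $\mu$ contributes its full algebraic multiplicity, since the associated solutions are genuine two-sided sequences built from $\mu^m$ (well defined for $m < 0$ precisely because $\mu \neq 0$); the eigenvalue $0$ and the eigenvalue $\infty$ contribute nothing, because there two-sidedness forces $Z \equiv 0$ (iterating the nilpotent relations $Z(m) = N Z(m + 1)$ backward, respectively $Z(m + 1) = N Z(m)$ forward); and any right singular Kronecker block yields a free two-sided sequence and hence an infinite-dimensional contribution. Since our pencil is square, its numbers of right and left singular blocks coincide, so it is singular (i.e.\ $\det(\lambda \mathcal{E} - \mathcal{F}) \equiv 0$) if and only if it carries such a right singular block.

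This yields the algorithm. Form the characteristic polynomial $p(\lambda) := \det(\lambda \mathcal{E} - \mathcal{F}) \in \mathbb{Q}[\lambda]$. If $p \equiv 0$, the solution space is infinite-dimensional; otherwise its dimension is finite and equals the number of nonzero roots of $p$ counted with multiplicity, namely $\deg p - \operatorname{ord}_0 p$, where $\operatorname{ord}_0 p$ is the multiplicity of $0$ as a root of $p$. Both quantities are read directly off the coefficients of $p$, so the whole procedure---computing $P$, the matrices $B_i$, the pencil, and the polynomial $p$---is effective over $\mathbb{Q}$; note that the eigenvalues themselves need not be computed, only the support of $p$.

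The routine part is the blocking reduction, which is a direct generalization of the interlacing identity already used in the paper. The main obstacle is the precise translation between the algebraic data of the pencil and the dimension of the \emph{two-sided} solution space: one must prove rigorously that the eigenvalues at $0$ and at $\infty$ make no contribution---this is exactly the feature distinguishing two-sided from one-sided sequences, and the source of the dimension dropping below $sP$---while establishing that a singular pencil genuinely produces infinitely many independent two-sided solutions rather than merely an underdetermined step map. A clean way to organize this, avoiding the full Kronecker form, is to first desingularize the matrix recurrence (bringing both $B_s$ and $B_0$ to full row rank by row operations on the associated matrix polynomial, in the style of EG-elimination), after which the finite and infinite cases separate transparently and the count $\deg p - \operatorname{ord}_0 p$ follows.
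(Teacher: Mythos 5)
Your proposal is correct, and while it opens with the same periodization step as the paper, it resolves the resulting constant-coefficient system by a genuinely different method. The shared part is the blocking of $y$ along a common period of the coefficients; the paper chooses the block length $H>r$ so as to land directly on a first-order system $A_0Y(n)+A_1Y(n+1)=0$, whereas you allow block order $s=\lceil r/P\rceil$ and then pass to companion form --- an immaterial difference. The divergence is in computing the dimension of the two-sided solution space of that system: the paper dualizes, identifying solutions with $\mathbb{Q}$-linear functionals on the quotient $F/M$ of a free $\mathbb{Q}[t,t^{-1}]$-module by the submodule generated by the rows of $A_0+tA_1$ (Lemma~\ref{lem:dim}), and then computes $\dim_{\mathbb{Q}}F/M$ by Gr\"obner bases; you analyze the pencil $\lambda\mathcal{E}-\mathcal{F}$ via its Kronecker canonical form and obtain the closed formula $\deg p-\operatorname{ord}_0 p$ for $p=\det(\lambda\mathcal{E}-\mathcal{F})$ when $p\not\equiv 0$, and $\infty$ otherwise. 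The two answers agree: by the Smith normal form over the principal ideal domain $\mathbb{Q}[t,t^{-1}]$ (whose units are $ct^k$), $\dim_{\mathbb{Q}}F/M$ is exactly the number of roots of $\det(A_0+tA_1)$ in $\overline{\mathbb{Q}}\setminus\{0\}$ counted with multiplicity, or $\infty$ when that determinant vanishes identically. Your route buys an explicit formula requiring only a determinant computation and no Gr\"obner bases; the paper's route buys a single duality lemma that treats the regular and singular cases uniformly, with no case analysis on Kronecker blocks. Two points you rightly flag as needing care are genuine but patchable: (i) the Kronecker form lives over $\overline{\mathbb{Q}}$, so one should note that the $\mathbb{Q}$-dimension of the rational solution space equals the $\overline{\mathbb{Q}}$-dimension of the $\overline{\mathbb{Q}}$-valued one, because the defining conditions are $\mathbb{Q}$-linear and base-change; and (ii) in the singular case one should exhibit infinitely many independent \emph{rational} two-sided solutions, which is cleanest not via the canonical form but by taking a nonzero $v(\lambda)=\sum_j v_j\lambda^j\in\mathbb{Q}[\lambda]^{sP}$ with $(\lambda\mathcal{E}-\mathcal{F})v(\lambda)=0$ and observing that $Z(m):=\sum_j c(m+j)v_j$ solves the recurrence for every scalar sequence $c$, with only a finite-dimensional kernel in $c$. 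Neither point undermines the approach.
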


Consider a positive integer $H > r$ such that the lengths of the periods of $a_0(n), \ldots, a_r(n)$ divide $H$ (such $H$ can always be taken to be a large enough common multiple of the period lengths).
We will ``decompose'' $y(n)$ into $H$ sequences $y_0(n) := y(H n), \; y_1(n) := y(Hn + 1), \;\ldots\;, y_{H - 1}(n) := y(Hn + H - 1)$.
Then the original equation~\eqref{eq} translates into the following $H$ linear difference equations with constant coefficients:
\begin{enumerate}
  \item $a_0(i) y_i(n) + \ldots + a_{r}(i) y_{i + r}(n) = 0$ for $0 \leqslant i < H - r$;
  \item $a_0(i) y_i(n) + \ldots + a_{H - 1 - i}(i) y_{H - 1}(n) + a_{H - i}(i) y_0(n + 1) + \ldots + a_r(i) y_{i + r - H}(n + 1) = 0$ for $H - r \leqslant i < H$.
\end{enumerate}
This way we have reduced the problem of computing the dimension of the solution space of~\eqref{eq} to the
problem of computing the dimension of the solution space of the system above.
We will state and solve this problem in a slightly more general form: for given $\ell \times H$ matrices $A_0$ and $A_1$, determine the dimension of the solution space of the system
\begin{equation}\label{eq:general_const}
A_0 \cdot (y_0(n), \;\ldots,\; y_{H - 1}(n))^T + A_1 \cdot (y_0(n + 1),\; \ldots,\; y_{H - 1}(n + 1))^T = 0.
\end{equation}
In order to do this, we consider a free module $F$ over the ring of Laurent polynomials $\mathbb{Q}[t, t^{-1}]$ with $H$ generators $e_0, \ldots, e_{H - 1}$.
Let $M$ be a submodule of $F$ generated by the entries of
\begin{equation}\label{eq:module_M}
  (A_0 + t A_1) \cdot (e_0,\;\ldots,\; e_{H - 1})^T.
\end{equation}

\begin{Lemma}\label{lem:dim}
  The dimension of the solution space of~\eqref{eq:general_const} is equal to the dimension of the quotient module $F / M$ over $\mathbb{Q}$.
\end{Lemma}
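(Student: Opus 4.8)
The plan is to reinterpret the system~\eqref{eq:general_const} module-theoretically and then reduce the statement to a duality computation. I regard the space $R$ of two-sided sequences as a module over $\mathbb{Q}[t, t^{-1}]$ in which $t$ acts by the shift $(t y)(n) = y(n+1)$; since the shift is invertible, $t^{-1}$ acts by the backward shift and this action is well defined. With this convention the system~\eqref{eq:general_const} becomes $(A_0 + t A_1) \cdot (y_0, \ldots, y_{H-1})^T = 0$. First I would observe that, because $F$ is free on $e_0, \ldots, e_{H-1}$, a $\mathbb{Q}[t,t^{-1}]$-linear map $\varphi \colon F \to R$ is the same as a choice of images $y_j := \varphi(e_j) \in R$, and that $\varphi$ sends the generators~\eqref{eq:module_M} of $M$ to the entries of $(A_0 + t A_1)(y_0, \ldots, y_{H-1})^T$. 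Hence $\varphi$ kills $M$ --- equivalently, factors through $F/M$ --- precisely when $(y_0, \ldots, y_{H-1})$ solves the system. This identifies the solution space of~\eqref{eq:general_const}, as a $\mathbb{Q}$-vector space, with $\operatorname{Hom}_{\mathbb{Q}[t,t^{-1}]}(F/M, R)$.

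It therefore suffices to prove that $\dim_{\mathbb{Q}} \operatorname{Hom}_{\mathbb{Q}[t,t^{-1}]}(N, R) = \dim_{\mathbb{Q}} N$ for every finitely generated $\mathbb{Q}[t,t^{-1}]$-module $N$, and to apply this to $N = F/M$ (which is finitely generated as a quotient of the free module $F$). The ring $\mathbb{Q}[t,t^{-1}]$ is a localization of the PID $\mathbb{Q}[t]$ and hence is itself a PID, so by the structure theorem $N \cong \mathbb{Q}[t,t^{-1}]^k \oplus \bigoplus_i \mathbb{Q}[t,t^{-1}]/(p_i)$ for some $k \geqslant 0$ and nonzero $p_i$. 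Both $\operatorname{Hom}(-, R)$ and $\dim_{\mathbb{Q}}$ turn a finite direct sum into a sum, so it is enough to check the claimed equality on each summand.

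For the free summand, $\operatorname{Hom}_{\mathbb{Q}[t,t^{-1}]}(\mathbb{Q}[t,t^{-1}], R) \cong R$ is infinite-dimensional over $\mathbb{Q}$, matching $\dim_{\mathbb{Q}} \mathbb{Q}[t,t^{-1}] = \infty$; thus both sides are infinite exactly when $k > 0$. For a torsion summand, since $t$ is a unit I may replace $p_i$ by a genuine polynomial with nonzero constant term without changing the ideal, and a homomorphism out of $\mathbb{Q}[t,t^{-1}]/(p_i)$ is determined by the image of $1$, which must be an element annihilated by $p_i$; thus $\operatorname{Hom}_{\mathbb{Q}[t,t^{-1}]}(\mathbb{Q}[t,t^{-1}]/(p_i), R) \cong \ker(p_i \colon R \to R)$, the solution space of the constant-coefficient recurrence $p_i(t) y = 0$. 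Because this $p_i$ has nonzero leading and trailing coefficients, a solution is uniquely determined both forward and backward by any $\deg p_i$ consecutive values, so this solution space has dimension exactly $\deg p_i = \dim_{\mathbb{Q}} \mathbb{Q}[t,t^{-1}]/(p_i)$, as required.

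The one point that genuinely uses the two-sided setting --- and which I regard as the crux --- is the invertibility of the shift: it is what allows working over the PID $\mathbb{Q}[t,t^{-1}]$ and what makes the per-summand count equal to $\deg p_i$ rather than something smaller, since over one-sided sequences the kernel of $p_i(t)$ would behave differently. Everything else is standard bookkeeping with the structure theorem and the additivity of $\operatorname{Hom}$.
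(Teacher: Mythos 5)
Your proof is correct, but it takes a genuinely different route from the paper's. The paper identifies the solution space of~\eqref{eq:general_const} directly with the space of $\mathbb{Q}$-linear functionals on $F/M$: since $\{t^n e_i\}_{n \in \mathbb{Z},\, 0 \leqslant i < H}$ is a $\mathbb{Q}$-basis of $F$, a functional $\varphi$ on $F/M$ yields the solution $y_i(n) := \varphi(t^n e_i)$ and conversely, and the lemma then follows from $\dim_{\mathbb{Q}} V^{*} = \dim_{\mathbb{Q}} V$ understood in $\mathbb{Z}_{\geqslant 0} \cup \{\infty\}$ (for infinite-dimensional $V$ both sides are simply $\infty$). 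You instead identify the solution space with $\operatorname{Hom}_{\mathbb{Q}[t,t^{-1}]}(F/M, R)$, where $R$ carries the shift action, and then invoke the structure theorem over the PID $\mathbb{Q}[t,t^{-1}]$, verifying the dimension count on each summand. Both arguments are valid. The paper's is shorter and needs no classification of modules; yours uses heavier machinery but makes explicit where two-sidedness enters (the kernel of $p_i(t)$ on two-sided sequences has full dimension $\deg p_i$ precisely because, after clearing powers of $t$, both the leading and the trailing coefficients are nonzero), and it additionally records the isomorphism type of $F/M$ rather than only its dimension. One step you leave implicit but which does hold: $\dim_{\mathbb{Q}} \mathbb{Q}[t,t^{-1}]/(p_i) = \deg p_i$ for your normalized $p_i$, because $p_i(0) \neq 0$ makes $t$ already invertible in $\mathbb{Q}[t]/(p_i)$, so localizing at $t$ changes nothing.
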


\begin{proof}
Let $S := F / M$.
We will define a linear bijective map between the solutions of~\eqref{eq:general_const} and linear functionals $S \to \mathbb{Q}$.
For a functional $\varphi\colon S \to \mathbb{Q}$, we define $y_i(n) = \varphi(t^n e_i)$.
Then the generators of $M$ and their translations by integer powers of $t$ will imply the equalities ~\eqref{eq:general_const}.

In the other direction, assume that we are given a solution of~\eqref{eq:general_const}.
We define $\widetilde{\varphi} \colon F \to \mathbb{Q}$ by $\widetilde{\varphi}(t^n e_i) = y_i(n)$.
Since the sequences $y_0(n), \ldots, y_{H - 1}(n)$ satisfy ~\eqref{eq:general_const}, we have that $\widetilde{\varphi}(M) = 0$.
Thus, $\widetilde{\varphi}$ induces a well-defined linear functional on the quotient module $F / M$.
\end{proof}

Thanks to Lemma~\ref{lem:dim}, the question of determining the dimension of the solution space of an equation~\eqref{eq} with periodic coefficients reduces to computing the dimension of the corresponding finitely presented module over the ring of Laurent polynomials.
The latter problem can be solved using Gr\"obner basis or using the Hermite normal form over the Laurent polynomial ring\footnote{We would like to thank Manuel Kauers for suggesting the approach via HNF.}.
In order to keep this note self-contained, we will present one way of computing this dimension based on a standard construction reducing the study of a module to the study of an ideal in a polynomial ring.

\begin{Lemma}\label{lem:module}
  In the notation of this section, we consider a polynomial ring $\mathcal{R} = \mathbb{Q}[t_1, t_{-1}, x_0, \ldots, x_{H - 1}]$ and an ideal
  \[
    I := \langle t_1t_{-1} - 1,\; (A_0 + t_1A_1) \cdot (x_0, \ldots, x_{H - 1})^T,\; \{x_i x_j | 0 \leqslant i, j < H\} \rangle.
  \]
  Then the $\mathbb{Q}$-dimension of the quotient module $F / M$ is equal to the $\mathbb{Q}$-dimension of the ideal generated by the images of $x_0, \ldots, x_{H - 1}$ in $\mathcal{R} / I$.
\end{Lemma}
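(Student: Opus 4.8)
The plan is to recognize $I$ as the defining ideal of the \emph{square-zero (Nagata) extension} of $L := \mathbb{Q}[t, t^{-1}]$ by the $L$-module $F/M$, i.e.\ the commutative ring whose underlying $\mathbb{Q}$-space is $L \oplus (F/M)$ with multiplication $(s_1, m_1)(s_2, m_2) = (s_1 s_2,\, s_1 m_2 + s_2 m_1)$; I will write this ring as $L \ltimes (F/M)$. Throughout I identify $\mathbb{Q}[t_1, t_{-1}]/(t_1 t_{-1} - 1)$ with $L$ via $t_1 \mapsto t$, $t_{-1} \mapsto t^{-1}$. The first thing I would record is that, setting $N := \langle x_0, \ldots, x_{H-1}\rangle$, the ideal $I$ contains $N^2$, since the generators $x_i x_j$ of $N^2$ are among the generators of $I$. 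This is precisely what forces the $x_i$ to span a square-zero ideal and is the structural heart of the construction.

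Next I would compute $\mathcal{R}/I$ in three routine stages. Passing to $\mathcal{R}/N^2$ yields, as a module over $\mathbb{Q}[t_1, t_{-1}]$, the free module $\mathbb{Q}[t_1, t_{-1}] \oplus \bigoplus_i \mathbb{Q}[t_1, t_{-1}]\, \bar x_i$ spanned by the monomials of degree at most $1$ in the $x$'s, equipped with the product in which any two $\bar x_i$ multiply to zero; this is exactly the square-zero extension $\mathbb{Q}[t_1, t_{-1}] \ltimes \bigoplus_i \mathbb{Q}[t_1, t_{-1}]\, x_i$. Dividing out by $t_1 t_{-1} - 1$ then replaces the scalar ring by $L$ and the module by $F = \bigoplus_i L\, e_i$, producing $L \ltimes F$. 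Finally, the remaining generators $(A_0 + t_1 A_1)\cdot(x_0, \ldots, x_{H-1})^T$ become the entries of $(A_0 + tA_1)\cdot(e_0, \ldots, e_{H-1})^T$; these lie in the square-zero part $F$, and since the product of two square-zero elements vanishes, the ideal they generate is just their $L$-span, namely $M$. Dividing out gives a ring isomorphism $\mathcal{R}/I \cong L \ltimes (F/M)$.

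Under this isomorphism the images of $x_0, \ldots, x_{H-1}$ land in the square-zero summand $\{0\} \oplus (F/M)$ and generate it as an ideal: multiplying them by the scalar part of $L \ltimes (F/M)$ recovers the full $L$-action (and the $\bar e_i$ generate $F/M$ over $L$), while multiplying by the square-zero part gives $0$. Hence the ideal generated by the images of the $x_i$ equals $\{0\} \oplus (F/M)$, whose $\mathbb{Q}$-dimension is $\dim_{\mathbb{Q}}(F/M)$, which is the asserted equality.

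The only real difficulty is bookkeeping: making the three stages precise as ring isomorphisms and, in particular, checking that imposing the linear relations contributes exactly $M$ and nothing more. A clean way to sidestep the staged quotients is to define the ring homomorphism $\psi\colon \mathcal{R} \to L \ltimes (F/M)$ directly by $t_1 \mapsto (t, 0)$, $t_{-1} \mapsto (t^{-1}, 0)$, $x_i \mapsto (0, \bar e_i)$; that each generator of $I$ maps to $0$ is immediate from the square-zero product and the definition of $M$, and surjectivity is clear. Injectivity then reduces, via $N^2 \subseteq I$, to comparing $\mathbb{Q}$-bases: the degree-$\le 1$ monomials in the $x$'s with Laurent-monomial coefficients in $t_1, t_{-1}$ span $\mathcal{R}/I$, their $x$-free part realizing $L$ and their degree-one part realizing $F/M$, matching the two summands of $L \ltimes (F/M)$.
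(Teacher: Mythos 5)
Your proof is correct, and its mathematical core coincides with the paper's: both arguments hinge on the two facts that $\{x_ix_j\}\subseteq I$ turns the ideal generated by the images of the $x_i$ in $\mathcal{R}/I$ into a module over $\mathbb{Q}[t_1,t_{-1}]/\langle t_1t_{-1}-1\rangle\cong\mathbb{Q}[t,t^{-1}]$, and that the only linear relations imposed on the $x_i$ are exactly the generators of $M$. The difference is in the packaging. The paper works only with that ideal $\widetilde{J}$: it defines the $\mathbb{Q}[t,t^{-1}]$-module map $\varphi\colon F\to\widetilde{J}$, $e_i\mapsto x_i$, notes $M\subseteq\operatorname{Ker}(\varphi)$ to get a surjection $F/M\to\widetilde{J}$, and then asserts injectivity rather tersely (``linear relations on $x$'s are exactly'' the entries of $(A_0+tA_1)(e_0,\ldots,e_{H-1})^T$). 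You instead identify the entire quotient ring as the square-zero extension $\mathcal{R}/I\cong\mathbb{Q}[t,t^{-1}]\ltimes(F/M)$, either by staged quotients or by the explicit homomorphism $\psi$. What this buys you is a fully explicit justification of the injectivity step --- the one point the paper glosses over --- since the stagewise computation (first $\mathcal{R}/N^2$ as a free module on the degree-$\leqslant 1$ monomials in the $x$'s, then reduction of the scalars to $\mathbb{Q}[t,t^{-1}]$, then the observation that an ideal generated by square-zero elements is just their scalar span, hence exactly $M$) shows that nothing beyond $M$ is being quotiented out. The small price is the bookkeeping you yourself flag, namely checking that the three stages together compute $I$ and not some larger ideal; once that is done, the conclusion that the ideal generated by the $\bar{x}_i$ is $\{0\}\oplus(F/M)$, of $\mathbb{Q}$-dimension $\dim_{\mathbb{Q}}(F/M)$, is the same endpoint the paper reaches.
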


\begin{proof}
  Let $J$ be the ideal in $\mathcal{R}$ generated by $I$ and $x_0, \ldots, x_{H - 1}$.
  We will denote its image in $\mathcal{R} / I$ by $\widetilde{J}$ which can be also viewed as an $\mathcal{R}/I$-module.
  Furthermore, since the multiplication by $x_0, \ldots, x_{H - 1}$ induces zero operator on $\widetilde{J}$, $\widetilde{J}$ is in fact a module over  $\mathcal{R} / J \cong \mathbb{Q}[t_{1}, t_{-1}] / \langle t_1t_{-1} - 1\rangle$.
  This ring is isomorphic to the ring of Laurent polynomials $\mathbb{Q}[t, t^{-1}]$ with the isomorphism given by $t_1 \to t, \; t_{-1} \to t^{-1}$.
  We define a homomorphism of modules over the Laurent polynomial ring $\varphi\colon F \to \widetilde{J}$ by $\varphi(e_i) = x_i$ for every $0 \leqslant i < H$.
  We observe that $\operatorname{Ker}(\varphi) \supset M$, so this homomorphism induces a surjective homomorphism $\widetilde{\varphi}\colon F / M \to \widetilde{J}$.
  Moreover, since $I$ contains all the degree two monomials in $x$'s, and linear relations on $x$'s are exactly~\eqref{eq:module_M}, $\widetilde{\varphi}$ is an isomorphism.
  The existence of such an isomorphism implies the equality of dimensions.
\end{proof}

Using Lemma~\ref{lem:module}, the dimension of $F / M$ can be determined by computing a Gr\"obner basis of $I$ and counting the monomials divisible by at least one of $x_0, \ldots, x_{H - 1}$ but not divisible by any of the leading monomials of the basis.
This completes the proof of Proposition~\ref{prop:constructive}.

Finally, we would like to point out that every periodic sequence satisfies a linear recurrence with constant coefficients (in other words, belongs to the class of $C$-finite sequences), so the class of sequences considered in this section is closely related with $C^2$-finite sequences studied in~\citep{C2finite}.
More precisely, $C^2$-finite sequence is a solution of~\eqref{eq} with the coefficients being $C$-finite such that the leading coefficient $a_r(n)$ does not contain zeros.
The latter condition considerably simplifies the problem of computing the dimension of the solution space (in particular, implies that this dimension is always finite).
A natural generalization of the problem studied in this section would be a problem of computing the dimension of the solution space of~\eqref{eq} with all coefficients being $C$-finite (that is, without requiring the absence of zeros in $a_r(n)$).
We do not know if this problem is algorithmically decidable.

\subsection*{Acknowledgements}

We would like to thank the referees for careful reading and detailed comments which helped us improve the manuscript.

%% The Appendices part is started with the command \appendix;
%% appendix sections are then done as normal sections
%% \appendix

%% \section{}
%% \label{}

%% If you have bibdatabase file and want bibtex to generate the
%% bibitems, please use
%%
\bibliographystyle{elsarticle-harv} 
\bibliography{bib}

%% else use the following coding to input the bibitems directly in the
%% TeX file.

\end{document}